\newcommand{\vecform}{\bm}
\newcommand{\bra}[1]{\mbox{$\langle #1 |$}}
\newcommand{\ket}[1]{\mbox{$| #1 \rangle$}}
\newcommand{\Tr}{\rm Tr}    
\newtheorem{thm}{Theorem}
\begin{document}

\title{A representability problem in density functional theory for superconductors}

\author{Jonathan Schmidt}
\affiliation{Institut f\"ur Physik, Martin-Luther-Universit\"at
Halle-Wittenberg, 06120 Halle (Saale), Germany}
  
\author{Carlos L. Benavides-Riveros}
\email{carlos.benavides-riveros@physik.uni-halle.de}
\affiliation{Institut f\"ur Physik, Martin-Luther-Universit\"at
Halle-Wittenberg, 06120 Halle (Saale), Germany}

\author{Miguel A. L. Marques}
\affiliation{Institut f\"ur Physik, Martin-Luther-Universit\"at
Halle-Wittenberg, 06120 Halle (Saale), Germany}

\date{\today}

\begin{abstract}
Aiming at a unified treatment of correlation 
and inhomogeneity effects in superconductors, 
Oli\-veira, Gross and Kohn proposed in 1988
a density functional theory for the superconducting 
state. This theory relies on the existence of a Kohn-
Sham scheme, i.e., an auxiliary noninteracting system 
with the same electron and anomalous densities
of the original superconducting system. However, the 
question of noninteracting $v$-representability has 
never been properly addressed and the existence of the 
Kohn-Sham system has always been assumed
without proof. Here, we show that indeed such a 
noninteracting system does not exist in at zero 
temperature. In spite of this result, we
also show that the theory is still able to yield 
good results, although in the limit of weakly correlated 
systems only.
\end{abstract}

\maketitle

The correct description of superconductors continues to be 
one of the great questions of modern condensed matter theory. 
We still have a poor understanding of the superconducting 
features of high-$T_c$ cuprates \cite{RevModPhys.78.17} or layered 
organic materials \cite{0034-4885-74-5-056501}. Moreover, recent 
exciting discoveries of superconductivity in two layers of graphene 
rotated by a small angle \cite{graphene,PhysRevLett.121.087001}, 
or in hydrides at high pressure~\cite{Eremets,PhysRevLett.114.157004} 
(whose transition temperatures are quickly approaching room 
temperature) continue to challenge our understanding of 
superconducting systems.

Until the discovery of unconventional superconductivity by Bednorz and
M\"uller \cite{bednorz1986possible}, Bardeen-Cooper-Schiefer (BCS)
theory \cite{PhysRev.108.1175} as well as its theoretical extensions
\cite{osti_7354388,gorkov} were the main framework for understanding
superconductivity. While the electron-phonon interaction is well
accounted for in BCS and Eliashberg's theories, correlation effects
due to the electron-electron Coulomb repulsion are extraordinarily
difficult to handle. Since those effects are usually condensed in a
single parameter (namely, $\mu^*$), which is almost always treated as an
adjustable quantity, the predictive capability of the theory is quite
limited.

Aiming for a unified treatment of electronic correlation and
inhomogeneity effects in superconductors, in 1988 Oliveira, Gross and
Kohn proposed a density functional theory (DFT) for the
superconducting state (SCDFT) \cite{PhysRevLett.60.2430}.  Normal DFT
is based on the famous Hohenberg-Kohn theorems \cite{HK}, whose main
statement is the observation that for non-degenerate systems there is
a one-to-one correspondence between the external potential, the
(non-degenerate) many-body ground state and the associated
ground-state electronic density. The DFT formalism for
superconductivity is based on a similar theorem, the
Oliveira-Gross-Kohn theorem \cite{PhysRevLett.60.2430}, which
guarantees a one-to-one mapping between the equilibrium statistical
density operator and the electronic $n(r) \equiv \langle
\hat\psi^\dagger(r) \hat\psi(r)\rangle$ and the anomalous
$\chi(r,r')\equiv \langle \hat\psi_{\uparrow}(r)
\hat\psi_{\downarrow}(r')\rangle$ densities:
\begin{align}
(n,\chi) \overset{\tiny {\rm 1-1}}{\Longleftrightarrow}  \hat \rho,
\end{align}
where $\hat \rho = e^{-\beta \hat{H}_{v\Delta}}/ {\Tr}[e^{-\beta
    \hat{H}_{v\Delta}}]$ is the equilibrium statistical operator,
    and $\hat{H}_{v\Delta}$ is the grand-canonical Hamiltonian for a
superconductor in an external potential $v(r)$ and a non-local pair
potential $\Delta(r,r')$. The field operator $\hat\psi_\sigma(r)$
($\hat\psi^\dagger_\sigma(r)$) destroys (creates) an electron with
spin $\sigma$ at position $r$.

Although SCDFT was later extended to consider the nuclear density
\cite{SCDFT1,SCDFT2}, the electronic problem is still trea\-ted at the
level of a two-component DFT. This version of SCDFT has been extremely
successful in predicting superconductivity in a wide variety of
materials \cite{cac6scdft,dissilanescdftpressure,PBSCDFT} and proved
especially useful for the investigation of superconductivity in
high-pressure environments
\cite{metalsscdftpressure,sulfurhybridesscdftpressure,seleniumscdftpressure}.
This success of SCDFT stems mainly from two facts: First, the gap
equation is a single 3-dimensional integral equation, while Eliashberg
equations form a system of 4-dimensional integral equations in
momentum and frequency space; Second, it is much simpler to
approximate and evaluate the electron-electron interaction term in
SCDFT. This allows us not to use a semi-empirical $\mu^*$, rendering
therefore SCDFT a truly \textit{ab initio}, predictive theory.

In principle, the formalism is able to describe superconductivity in 
all systems, for all complexities of the many-electron problem are cast 
into a universal exchange-co\-rre\-lation functional whose existence is 
ensured  by the  Oliveira-Gross-Kohn theorem. Yet, the practical 
applicability 
of SCDFT rests (as in the standard DFT) on a Kohn-Sham scheme, which 
maps the real interacting system of interest to an auxiliary  
noninteracting one with the same equilibrium electronic and 
anomalous densities. The Kohn-Sham system of SCDFT is described 
in second quantization by the following Hamiltonian:
\begin{align}
\nonumber
H_\text{s} &= -\sum_\sigma \int \hat\psi_\sigma^{\dagger}(r)\left[\frac{\nabla^2}{2}+\mu-
v_\text{s}(r)\right]\hat\psi_\sigma(r)dr\nonumber\\ &\qquad
-\int\left[ \Delta_\text{s}^*(r,r')\hat\psi_\uparrow(r)\hat\psi_\downarrow(r')+h.c.\right]drdr',
\label{eq:hamiltonian}
\end{align}
where $\mu$ is the chemical potential and $v_\text{s}(r)$ and
$\Delta_\text{s}(r,r')$ are the normal and anomalous Kohn-Sham 
potentials, computed by means of functional derivatives of the 
universal SCDFT functional.

In standard DFT, the question whether one can always find a
noninteracting system with external potential $v_\text{s}$ for which
the ground-state electron density is the same of the full interacting
system, is known as the \textit{noninteracting $v$-representability}
problem \cite{Engel,VANLEEUWEN200325,10.1007/978-3-319-04912-0_2}.  In
SCDFT a system is noninteracting ($v, \Delta$)-representable if a
noninteracting system defined through the potentials $v_\text{s}(r)$
and $\Delta_\text{s}(r,r')$ can be found such that the equilibrium
densities of the system are respectively equal to $n(r)$ and
$\chi(r,r')$. Unlike zero-temperature DFT~\cite{Levy6062}, in SCDFT this
question has never been properly addressed and the existence of the
Kohn-Sham system has always been assumed without proof. In this letter
we prove that indeed such a noninteracting system does 
\textit{not exist}, at least at zero temperature!

By and large, this is a difficult problem in the light of the 
well-known fact that the set of interacting pure-state $v$-representable 
electronic densities 
\begin{align}
\mathcal{B} &= \left\{ n = \bra{\Psi}\hat n\ket{\Psi} \,|\,
\ket{\Psi} \in \mathcal{G} \right\},
\end{align}
where $\mathcal{G}$ is the set of ground states
for some external potential $v$, is not identical to
the set of noninteracting pure-state $v$-representable 
electronic densities \cite{doi:10.1002/qua.560240302}.
As such, representability questions are well-known problems 
in many-body physics with no general answers. 
For instance, for spin-polarized systems two different 
non-degenerate ground states always lead to two different 
sets of electronic and magnetization densities. Yet two different 
sets of external potentials $(v, \vecform{B})$ can lead to the 
same ground state, and therefore ---unlike standard DFT--- 
the potentials of spin-polarized systems are not unique 
functionals of the spin densities \cite{PhysRevLett.86.5546}.

A second example, that will turn out to be essential for our 
purposes, is reduced density matrix functional theory (RDMFT). 
Instead of the density, in RDMFT the main object is $\hat \rho_1$, 
the one-particle reduced density matrix, whose diagonal is 
the electronic density $n(r)$. Similarly to DFT, RDMFT is 
based on a variational principle stating that the ground-state 
energy of a fermionic system can be obtained by minimizing some 
energy functional on the set of $N$-representable 
one-body reduced density matrices \cite{PhysRevB.12.2111}.
The condition that $\hat\rho_1$ must satisfy in order to ensure its 
$N$-representability (i.e.~that there is a fermionic 
statistical operator whose contraction leads to $\hat\rho_1$) 
reads simply \cite{Col2}: $0 \leq \hat \rho_1 \leq 1$, provided 
that $ {\Tr} [\hat \rho_1] = N$. For pure systems these 
representability conditions are known to be more stringent \cite{Kly2,SBV}.
Yet, when the fermionic density operator corresponds to a 
noninteracting system at zero temperature (and the many-body 
state reduces to a single Slater determinant) the one-body 
reduced density matrix is in addition idempotent, namely, 
\begin{align} 
\label{eq:idem}
\hat \rho_1^2 = \hat\rho_1.
\end{align}
Zero-temperature noninteracting one-body reduced densities are on the
boundary of the set of all fermionic one-body reduced densities. 
Since for interacting systems $\hat \rho_1$ is not idempotent 
(in fact, $\hat \rho_1^2 < \hat\rho_1$~\cite{Seir}) it is known 
that there is no Kohn-Sham system for RDMFT at zero temperature.
In other words, interacting one-body reduced density matrices are non-\textit{non\-interacting $v$-representable} at zero temperature. 

Now, returning to the problem of noninteracting 
re\-pre\-sen\-tability in SCDFT, we notice that the Hamiltonian
\eqref{eq:hamiltonian} can be diagonalized by a Bogoliubov
transformation of the field operators $\hat \psi_\sigma$:
\begin{align}
\label{eq:BdGtrnasf}
\hat\psi_{\sigma}(r) = 
\sum_{k} u_{k}(x) \hat\gamma_{k,\sigma}-\text{sign}(\sigma)v^*_{k}(x) 
\hat\gamma^{\dagger}_{k,-\sigma}.
\end{align}
Here $\hat\gamma^\dagger_{k,\sigma}$ and $\hat\gamma_{k,\sigma}$ 
are creation and annihilation operators of fermionic quasiparticles.
The transformation \eqref{eq:BdGtrnasf} leads to a set of 
Bogo\-liu\-bov-de Gennes equations for the coefficient 
functions $u_k(r)$ and $v_k(r)$ as well as the excitation 
energies of the quasiparticles $E_k$. These equations are 
the counterpart to the Kohn-Sham orbital equations in normal DFT. 
Moreover, we arrive at a set of self-consistent equations 
for the ground-state densities, which take the following form at 
zero temperature:
\begin{align}
\nonumber
\rho_1(r,r')&=\sum_{k} u_k(r) u_k^*(r') \theta(-E_k) + v_k(r)v_k^*(r')\theta(E_k), \\
\chi(r,r')&=\sum_{k} u_{k}(r)v^*_{k}(r')\theta(E_k)-u_{k}(r')v^*_{k}(x)\theta(-E_k),
\end{align}
with $\theta(x)$ being the step function. Clearly, the ground-state 
density is equal to $n(r)=\rho_1(r,r)$. In order to treat the problem 
of the existence of the Kohn-Sham system in the zero-temperature case, 
we will make use of the Nambu-Gorkov formalism. In Nambu-Gorkov space 
the field operators are defined as the following spinors:
\begin{align}
\bar{\psi}_{\downarrow}(r) =
\bigg(\begin{array}{c} \hat\psi^{\dagger}_{\downarrow}(r) \\ 
\hat\psi_{\uparrow}(r) \end{array}\bigg)
\quad {\rm and} \quad
\bar{\psi}_{\uparrow}(r)=
\bigg(\begin{array}{c} \hat\psi^{\dagger}_{\uparrow}(r) \\ 
\hat\psi_{\downarrow}(r) \end{array}\bigg).
\end{align}
We can write the generalized one-body reduced density matrix as a 
tensor-product of Nambu-Gorkov field-operators:
\begin{align}
    \Gamma(r,r')= \langle\bar{\psi}_{\downarrow}(r)
    \otimes \bar{\psi}^\dagger_{\downarrow}(r')\rangle .
    % =
    %\begin{bmatrix}
    %	\langle\psi^{\dagger}_{\downarrow}(r)\psi_{\downarrow}(r')\rangle 
    %  & \langle\psi^{\dagger}_{\downarrow}(r)\psi^{\dagger}_{\uparrow}(r')\rangle  \\ 
    %    \langle\psi_{\uparrow}(r)\psi_{\downarrow}(r')\rangle  & 
    %    \langle\psi_{\uparrow}(r)\psi^{\dagger}_{\uparrow}(r')\rangle
    %\end{bmatrix}
    \end{align}
Obviously, $\Gamma(r,r')$ can be expressed in terms of the normal 
one-body reduced density matrix $\rho_1(r,r')$ and the anomalous 
density $\chi(r,r')$.

To answer the question of \textit{noninteracting 
$(v,\Delta)$-re\-pre\-sen\-ta\-bi\-lity} of the superconducting 
densities it is worth studying the structure of the ground states 
of the Hamiltonian \eqref{eq:hamiltonian}. In the case of the 
normal noninteracting electronic system (i.e.~$\Delta_\text{s} = 0$)
the ground-state one-body reduced density matrix is idempotent 
\eqref{eq:idem}. Surprisingly, the same is true for the 
Nambu-Gorkov generalized one-body reduced density matrix for the 
noninteracting Hamiltonian~\eqref{eq:hamiltonian}, namely, 
$\hat \Gamma^2 = \hat \Gamma$. It is simple to show that 
idempotency of $\hat \Gamma$ reduces to the equation:
\begin{align}
\label{Eq.initial}
\hat \rho_1 = \hat \rho_1^2 + \chi^\dagger\chi.
\end{align}
This is a particular case of a general result in mathematical physics,
especially generalized Hartree-Fock-Bogoliubov theory \cite{Bach1994}. 
Indeed, by a generalization of the Lieb’s variational principle,
for systems with semi-bounded Hamiltonians, the infimum of the energy 
over quasifree states (the ones satisfying Wick's theorem) is reached 
by a pure state \cite{doi:10.1063/1.4853875}. A quasifree 
state is \textit{pure} if and only if $\hat \Gamma$ is idempotent 
\cite{doi:10.1063/1.4853875,Bach1994}.

Yet in SCDFT we are not interested in $\rho_1(r,r')$ but in the 
electronic density $n(r)$. From Eq.~\eqref{Eq.initial}, for a 
superconducting noninteracting system, $n(r)$ can be written in 
terms of $\hat\rho_1$ and $\chi$. It reads:
\begin{align}\label{eq:10}
n(r)= {\Tr} \left[\hat\rho^2_1(r,r')+|\chi(r,r')|^2\right].
\end{align}
By construction, in SCDFT no prior relation exists between $n(r)$ and $\chi(r,r')$ and they are assumed to be
independent variables. 

We will now demonstrate that in the Kohn-Sham system these two 
quantities are not independent for translation-invariant 
systems, for which the one-body reduced density matrix  
$\rho_1(r,r') = \rho_1(r-r',0)$ only depends on the differences $r-r'$. 
In reciprocal space such density depends on one variable:
\begin{align}
\rho_1(k) = \int \rho_1(r-r',0)e^{-ik(r-r')} d(r-r').
\end{align}
Analogously we can write $\chi(r,r')$ as $\chi(k)$ \cite{Hainzl2008}.
For the Kohn-Sham noninteracting system $\Gamma^2 = \Gamma$, and 
therefore 
$\rho_1(k)=\rho_1^2(k)+|\chi(k)|^2$. It yields:
\begin{align}
\rho_1(k)=\frac{1 + \text{sign}(E_k) \sqrt[]{1-4|\chi(k)|^2}}{2}.
\end{align} 
We already noted that Eq.~\eqref{eq:10} is valid in the 
noninteracting system case. For translation-invariant systems 
this results in:
\begin{gather}
n(r)=\int \frac{1}{2} \left(1+\text{sign}(E_k)
\sqrt[]{1-4|\chi(k)|^2}\right)dk
\end{gather}
In other words, we have an explicit equation for the electron 
density in terms of $\chi(r,r')$ and we can write:
\begin{gather}
\frac{\delta n(r)}{\delta \chi(k)}=-\frac{2\,\text{sign}(E_k)}
{\sqrt{1-4|\chi(k)|^2}}\chi^*(k),
\end{gather}
As expected,  $n$ and $\chi(r,r')$ are \textit{not} independent 
and one is univocally determined by each other. 
Now, it is evident that the set of noninteracting 
$(v,\Delta)$-representable 
densities and the set of interacting $(v,\Delta)$-representable 
densities cannot coincide. Hence, no Kohn-Sham system can in general 
exist at zero temperature in the formalism of SCDFT.  

It is possible to gain further insight into this problem at least 
for weakly correlated systems. Indeed, it is quite remarkable that 
for the homogeneous electron gas including electron-electron 
interactions to the Hamiltonian \eqref{eq:hamiltonian} maintains 
the idempotency of $\hat \Gamma$ in first-order perturbation theory. 
To see that let us perturb $H_\text{s}$ with the Coulomb interaction, 
namely,  $\hat H = \hat H_\text{s} + \lambda \hat W_{\rm e-e}$. In 
first-order perturbation theory the perturbed wave function includes 
only double and quadruple excitations of the quasiparticle vacuum 
(the superconducting ground state) $\ket{0}$ (see Supplemental 
Material). It reads:
\begin{align}
\label{eq:mod}
\ket{0_\lambda} &= \ket{0} + \lambda \sum_k b_k 
\hat \gamma^\dagger_{k0} \hat\gamma^\dagger_{k1}\ket{0} 
\\ & \qquad + \lambda \sum_{qkk'} d_{qkk'} 
\hat\gamma^\dagger_{(k+q)0} \hat\gamma^\dagger_{-(k'-q)1} 
\hat\gamma^\dagger_{-k'0} \hat\gamma^\dagger_{k1}\ket{0} 
+ \mathcal{O}(\lambda^2), \nonumber
\end{align}
where the amplitudes are given as the expected value of the 
interelectronic interaction between the quasiparticle vacuum and 
the excited state $\hat \gamma^\dagger_{k0} \hat\gamma^\dagger_{k1}
\ket{0}$ (with energy $E_k$):
\begin{align}
b_k = \frac{\bra{0} \hat \gamma_{k1} \hat\gamma_{k0}
\hat W_{\rm e-e}\ket{0}} {E_0 - E_k},
\end{align}
where $E_0$ is the vacuum's energy, as well as with the 
state $\hat\gamma^\dagger_{(k+q)0}\hat\gamma^\dagger_{-(k'-q)1} 
\hat\gamma^\dagger_{-k'0} \hat\gamma^\dagger_{k1}\ket{0}$ 
(with energy $E_{q kk'}$),
\begin{align}
d_{qkk'} = \frac{\bra{0} \hat\gamma_{k1}\hat\gamma_{-k'0}
\hat\gamma_{-(k'-q)1} \hat\gamma_{(k+q)0}
 \hat W_{\rm e-e} \ket{0}}{E_0 - E_{q kk'}}.
\end{align}
Notice that in Eq.~\eqref{eq:mod} 
$\bra{0_\lambda}0_\lambda\rangle = 1 + \mathcal{O}(\lambda^2)$, 
so that in leading order the wave function is fairly normalized. 
It is easy to see that the perturbed diagonal densities read
\begin{align*}
\rho'_{k} = \rho_{k} + \lambda \rho^{(1)}_{k} 
+ \mathcal{O}(\lambda^2), \; 
\; \chi'_{k} = \chi_{k} + \lambda \chi^{(1)}_{k} 
+ \mathcal{O}(\lambda^2),
\end{align*} 
where $\rho^{(1)}_{k} = b_k u_k^*v_k^* + b_k^* u_k v_k$ and 
$\chi^{(1)}_{k} = -b_k u_k^{*2} + b^*_k v_k^2$.
Since $2 \rho_{k} \rho^{(1)}_{k} + (\chi^*_{k}\chi_{k}^{(1)} + 
\text{c.c.}) 
= \rho_{k}^{(1)}$, $\hat \Gamma$ is idempotent in first order 
perturbation theory.

This result indicates that, for weakly correlated systems, in
principle it is still possible to find a noninteracting system that
reproduces the corresponding densities in first order of the
perturbation. This result also explains why despite the fact that a
Kohn-Sham system does not exist in general in SCDFT, the theory is
still able to give good results for weakly correlated systems. A
different conclusion arises from second-order perturbation theory,
since the idempotency is, indeed, broken at such order (see Supplemental
Material).
 
In layman's terms, at zero temperature the existence of 
a Kohn-Sham system is equivalent to the fact that for 
every interacting pair $(n,\chi)$ there is a noninteracting 
system that has the same noninteracting $(n,\chi)$. However, 
our result implies that there is a class of systems where all 
noninteracting densities are of the form 
$(n,\chi[n])$.
This immediately proves that there are an infinite set of interacting 
$(n,\chi)$ that are not noninteracting $(v,\Delta)$-representable,
for which therefore no Kohn-Sham system exists.

We note that this problem of non-interacting 
$(v,\Delta)$-re\-pre\-sen\-tability is to some extent 
caused by the use of both a local density $n$ and a component 
of the density matrix $\chi$. As such, DFT for superconductors 
is somewhat a hybrid theory, inheriting the problems of both DFT 
and RDMFT. A possible workaround is to simply use the density matrix in
Nambu-Gorkov space as a fundamental variable. This would lead to a
more symmetric and elegant theory, namely a reduced-density matrix
theory for superconductors, that would circumvent many of the problems
of the current approach. Work along this lines is already in pro\-gress.

We thank Hardy Gross for helpful discussions. We acknowledge financial
support from the DFG through Projects No. SFB-762 and No. MA 6787/1-1
(M.A.L.M.).

\bibliography{Laverna}

%merlin.mbs apsrev4-1.bst 2010-07-25 4.21a (PWD, AO, DPC) hacked
%Control: key (0)
%Control: author (0) dotless jnrlst
%Control: editor formatted (1) identically to author
%Control: production of article title (0) allowed
%Control: page (1) range
%Control: year (0) verbatim
%Control: production of eprint (0) enabled
\begin{thebibliography}{34}%
\makeatletter
\providecommand \@ifxundefined [1]{%
 \@ifx{#1\undefined}
}%
\providecommand \@ifnum [1]{%
 \ifnum #1\expandafter \@firstoftwo
 \else \expandafter \@secondoftwo
 \fi
}%
\providecommand \@ifx [1]{%
 \ifx #1\expandafter \@firstoftwo
 \else \expandafter \@secondoftwo
 \fi
}%
\providecommand \natexlab [1]{#1}%
\providecommand \enquote  [1]{``#1''}%
\providecommand \bibnamefont  [1]{#1}%
\providecommand \bibfnamefont [1]{#1}%
\providecommand \citenamefont [1]{#1}%
\providecommand \href@noop [0]{\@secondoftwo}%
\providecommand \href [0]{\begingroup \@sanitize@url \@href}%
\providecommand \@href[1]{\@@startlink{#1}\@@href}%
\providecommand \@@href[1]{\endgroup#1\@@endlink}%
\providecommand \@sanitize@url [0]{\catcode `\\12\catcode `\$12\catcode
  `\&12\catcode `\#12\catcode `\^12\catcode `\_12\catcode `\%12\relax}%
\providecommand \@@startlink[1]{}%
\providecommand \@@endlink[0]{}%
\providecommand \url  [0]{\begingroup\@sanitize@url \@url }%
\providecommand \@url [1]{\endgroup\@href {#1}{\urlprefix }}%
\providecommand \urlprefix  [0]{URL }%
\providecommand \Eprint [0]{\href }%
\providecommand \doibase [0]{http://dx.doi.org/}%
\providecommand \selectlanguage [0]{\@gobble}%
\providecommand \bibinfo  [0]{\@secondoftwo}%
\providecommand \bibfield  [0]{\@secondoftwo}%
\providecommand \translation [1]{[#1]}%
\providecommand \BibitemOpen [0]{}%
\providecommand \bibitemStop [0]{}%
\providecommand \bibitemNoStop [0]{.\EOS\space}%
\providecommand \EOS [0]{\spacefactor3000\relax}%
\providecommand \BibitemShut  [1]{\csname bibitem#1\endcsname}%
\let\auto@bib@innerbib\@empty
%</preamble>
\bibitem [{\citenamefont {Lee}\ \emph {et~al.}(2006)\citenamefont {Lee},
  \citenamefont {Nagaosa},\ and\ \citenamefont {Wen}}]{RevModPhys.78.17}%
  \BibitemOpen
  \bibfield  {author} {\bibinfo {author} {\bibfnamefont {Patrick~A.}\
  \bibnamefont {Lee}}, \bibinfo {author} {\bibfnamefont {Naoto}\ \bibnamefont
  {Nagaosa}}, \ and\ \bibinfo {author} {\bibfnamefont {Xiao-Gang}\ \bibnamefont
  {Wen}},\ }\bibfield  {title} {\enquote {\bibinfo {title} {Doping a mott
  insulator: Physics of high-temperature superconductivity},}\ }\href {\doibase
  10.1103/RevModPhys.78.17} {\bibfield  {journal} {\bibinfo  {journal} {Rev.
  Mod. Phys.}\ }\textbf {\bibinfo {volume} {78}},\ \bibinfo {pages} {17--85}
  (\bibinfo {year} {2006})}\BibitemShut {NoStop}%
\bibitem [{\citenamefont {Powell}\ and\ \citenamefont
  {McKenzie}(2011)}]{0034-4885-74-5-056501}%
  \BibitemOpen
  \bibfield  {author} {\bibinfo {author} {\bibfnamefont {B.~J.}\ \bibnamefont
  {Powell}}\ and\ \bibinfo {author} {\bibfnamefont {R.~H.}\ \bibnamefont
  {McKenzie}},\ }\bibfield  {title} {\enquote {\bibinfo {title} {Quantum
  frustration in organic mott insulators: from spin liquids to unconventional
  superconductors},}\ }\href {http://stacks.iop.org/0034-4885/74/i=5/a=056501}
  {\bibfield  {journal} {\bibinfo  {journal} {Rep. Prog. Phys,}\ }\textbf
  {\bibinfo {volume} {74}},\ \bibinfo {pages} {056501} (\bibinfo {year}
  {2011})}\BibitemShut {NoStop}%
\bibitem [{\citenamefont {Cao}\ \emph {et~al.}(2018)\citenamefont {Cao},
  \citenamefont {Fatemi}, \citenamefont {Fang}, \citenamefont {Watanabe},
  \citenamefont {Taniguchi}, \citenamefont {Kaxiras},\ and\ \citenamefont
  {Jarillo-Herrero}}]{graphene}%
  \BibitemOpen
  \bibfield  {author} {\bibinfo {author} {\bibfnamefont {Y.}~\bibnamefont
  {Cao}}, \bibinfo {author} {\bibfnamefont {V.}~\bibnamefont {Fatemi}},
  \bibinfo {author} {\bibfnamefont {S.}~\bibnamefont {Fang}}, \bibinfo {author}
  {\bibfnamefont {K.}~\bibnamefont {Watanabe}}, \bibinfo {author}
  {\bibfnamefont {T.}~\bibnamefont {Taniguchi}}, \bibinfo {author}
  {\bibfnamefont {E.}~\bibnamefont {Kaxiras}}, \ and\ \bibinfo {author}
  {\bibfnamefont {P.}~\bibnamefont {Jarillo-Herrero}},\ }\bibfield  {title}
  {\enquote {\bibinfo {title} {Unconventional superconductivity in magic-angle
  graphene superlattices},}\ }\href {\doibase 10.1038/nature26160} {\bibfield
  {journal} {\bibinfo  {journal} {Nature}\ }\textbf {\bibinfo {volume} {556}},\
  \bibinfo {pages} {43} (\bibinfo {year} {2018})}\BibitemShut {NoStop}%
\bibitem [{\citenamefont {Xu}\ and\ \citenamefont
  {Balents}(2018)}]{PhysRevLett.121.087001}%
  \BibitemOpen
  \bibfield  {author} {\bibinfo {author} {\bibfnamefont {C.}~\bibnamefont
  {Xu}}\ and\ \bibinfo {author} {\bibfnamefont {L.}~\bibnamefont {Balents}},\
  }\bibfield  {title} {\enquote {\bibinfo {title} {Topological
  superconductivity in twisted multilayer graphene},}\ }\href {\doibase
  10.1103/PhysRevLett.121.087001} {\bibfield  {journal} {\bibinfo  {journal}
  {Phys. Rev. Lett.}\ }\textbf {\bibinfo {volume} {121}},\ \bibinfo {pages}
  {087001} (\bibinfo {year} {2018})}\BibitemShut {NoStop}%
\bibitem [{\citenamefont {Drozdov}\ \emph {et~al.}(2015)\citenamefont
  {Drozdov}, \citenamefont {Eremets}, \citenamefont {Troyan}, \citenamefont
  {Ksenofontov},\ and\ \citenamefont {Shylin}}]{Eremets}%
  \BibitemOpen
  \bibfield  {author} {\bibinfo {author} {\bibfnamefont {A.~P.}\ \bibnamefont
  {Drozdov}}, \bibinfo {author} {\bibfnamefont {M.~I.}\ \bibnamefont
  {Eremets}}, \bibinfo {author} {\bibfnamefont {I.~A.}\ \bibnamefont {Troyan}},
  \bibinfo {author} {\bibfnamefont {V.}~\bibnamefont {Ksenofontov}}, \ and\
  \bibinfo {author} {\bibfnamefont {S.~I.}\ \bibnamefont {Shylin}},\ }\bibfield
   {title} {\enquote {\bibinfo {title} {{Conventional superconductivity at 203
  kelvin at high pressures in the sulfur hydride system}},}\ }\href
  {https://www.nature.com/articles/nature14964} {\bibfield  {journal} {\bibinfo
   {journal} {Nature (London)}\ }\textbf {\bibinfo {volume} {525}},\ \bibinfo
  {pages} {73} (\bibinfo {year} {2015})}\BibitemShut {NoStop}%
\bibitem [{\citenamefont {Errea}\ \emph {et~al.}(2015)\citenamefont {Errea},
  \citenamefont {Calandra}, \citenamefont {Pickard}, \citenamefont {Nelson},
  \citenamefont {Needs}, \citenamefont {Li}, \citenamefont {Liu}, \citenamefont
  {Zhang}, \citenamefont {Ma},\ and\ \citenamefont
  {Mauri}}]{PhysRevLett.114.157004}%
  \BibitemOpen
  \bibfield  {author} {\bibinfo {author} {\bibfnamefont {I.}~\bibnamefont
  {Errea}}, \bibinfo {author} {\bibfnamefont {M.}~\bibnamefont {Calandra}},
  \bibinfo {author} {\bibfnamefont {C.}~\bibnamefont {Pickard}}, \bibinfo
  {author} {\bibfnamefont {J.}~\bibnamefont {Nelson}}, \bibinfo {author}
  {\bibfnamefont {R.}~\bibnamefont {Needs}}, \bibinfo {author} {\bibfnamefont
  {Y.}~\bibnamefont {Li}}, \bibinfo {author} {\bibfnamefont {H.}~\bibnamefont
  {Liu}}, \bibinfo {author} {\bibfnamefont {Y.}~\bibnamefont {Zhang}}, \bibinfo
  {author} {\bibfnamefont {Y.}~\bibnamefont {Ma}}, \ and\ \bibinfo {author}
  {\bibfnamefont {F.}~\bibnamefont {Mauri}},\ }\bibfield  {title} {\enquote
  {\bibinfo {title} {High-pressure hydrogen sulfide from first principles: A
  strongly anharmonic phonon-mediated superconductor},}\ }\href {\doibase
  10.1103/PhysRevLett.114.157004} {\bibfield  {journal} {\bibinfo  {journal}
  {Phys. Rev. Lett.}\ }\textbf {\bibinfo {volume} {114}},\ \bibinfo {pages}
  {157004} (\bibinfo {year} {2015})}\BibitemShut {NoStop}%
\bibitem [{\citenamefont {Bednorz}\ and\ \citenamefont
  {M{\"u}ller}(1986)}]{bednorz1986possible}%
  \BibitemOpen
  \bibfield  {author} {\bibinfo {author} {\bibfnamefont {J.~G.}\ \bibnamefont
  {Bednorz}}\ and\ \bibinfo {author} {\bibfnamefont {K.~A.}\ \bibnamefont
  {M{\"u}ller}},\ }\bibfield  {title} {\enquote {\bibinfo {title} {{Possible
  highTc superconductivity in the Ba-La-Cu-O system}},}\ }\href {\doibase
  10.1007/BF01303701} {\bibfield  {journal} {\bibinfo  {journal} {Z. Phys. B}\
  }\textbf {\bibinfo {volume} {64}},\ \bibinfo {pages} {189} (\bibinfo {year}
  {1986})}\BibitemShut {NoStop}%
\bibitem [{\citenamefont {Bardeen}\ \emph {et~al.}(1957)\citenamefont
  {Bardeen}, \citenamefont {Cooper},\ and\ \citenamefont
  {Schrieffer}}]{PhysRev.108.1175}%
  \BibitemOpen
  \bibfield  {author} {\bibinfo {author} {\bibfnamefont {J.}~\bibnamefont
  {Bardeen}}, \bibinfo {author} {\bibfnamefont {L.~N.}\ \bibnamefont {Cooper}},
  \ and\ \bibinfo {author} {\bibfnamefont {J.~R.}\ \bibnamefont {Schrieffer}},\
  }\bibfield  {title} {\enquote {\bibinfo {title} {Theory of
  superconductivity},}\ }\href {\doibase 10.1103/PhysRev.108.1175} {\bibfield
  {journal} {\bibinfo  {journal} {Phys. Rev.}\ }\textbf {\bibinfo {volume}
  {108}},\ \bibinfo {pages} {1175--1204} (\bibinfo {year} {1957})}\BibitemShut
  {NoStop}%
\bibitem [{\citenamefont {Eliashberg}(1960)}]{osti_7354388}%
  \BibitemOpen
  \bibfield  {author} {\bibinfo {author} {\bibfnamefont {G.~M.}\ \bibnamefont
  {Eliashberg}},\ }\bibfield  {title} {\enquote {\bibinfo {title} {Interactions
  between electrons and lattice vibrations in a superconductor},}\ }\href
  {\doibase http://www.jetp.ac.ru/cgi-bin/dn/e_007_06_0996.pdf} {\bibfield
  {journal} {\bibinfo  {journal} {Sov. Phys. JETP}\ }\textbf {\bibinfo {volume}
  {11}},\ \bibinfo {pages} {696} (\bibinfo {year} {1960})}\BibitemShut
  {NoStop}%
\bibitem [{\citenamefont {Gor'kov}(1958)}]{gorkov}%
  \BibitemOpen
  \bibfield  {author} {\bibinfo {author} {\bibfnamefont {L.~P.}\ \bibnamefont
  {Gor'kov}},\ }\bibfield  {title} {\enquote {\bibinfo {title} {On the energy
  spectrum of superconductors},}\ }\href {\doibase
  http://www.jetp.ac.ru/cgi-bin/dn/e_007_06_0996.pdf} {\bibfield  {journal}
  {\bibinfo  {journal} {Sov. Phys. JETP}\ }\textbf {\bibinfo {volume} {34}},\
  \bibinfo {pages} {505} (\bibinfo {year} {1958})}\BibitemShut {NoStop}%
\bibitem [{\citenamefont {Oliveira}\ \emph {et~al.}(1988)\citenamefont
  {Oliveira}, \citenamefont {Gross},\ and\ \citenamefont
  {Kohn}}]{PhysRevLett.60.2430}%
  \BibitemOpen
  \bibfield  {author} {\bibinfo {author} {\bibfnamefont {L.~N.}\ \bibnamefont
  {Oliveira}}, \bibinfo {author} {\bibfnamefont {E.~K.~U.}\ \bibnamefont
  {Gross}}, \ and\ \bibinfo {author} {\bibfnamefont {W.}~\bibnamefont {Kohn}},\
  }\bibfield  {title} {\enquote {\bibinfo {title} {Density-functional theory
  for superconductors},}\ }\href {\doibase 10.1103/PhysRevLett.60.2430}
  {\bibfield  {journal} {\bibinfo  {journal} {Phys. Rev. Lett.}\ }\textbf
  {\bibinfo {volume} {60}},\ \bibinfo {pages} {2430} (\bibinfo {year}
  {1988})}\BibitemShut {NoStop}%
\bibitem [{\citenamefont {Hohenberg}\ and\ \citenamefont {Kohn}(1964)}]{HK}%
  \BibitemOpen
  \bibfield  {author} {\bibinfo {author} {\bibfnamefont {P.}~\bibnamefont
  {Hohenberg}}\ and\ \bibinfo {author} {\bibfnamefont {W.}~\bibnamefont
  {Kohn}},\ }\bibfield  {title} {\enquote {\bibinfo {title} {Inhomogeneous
  electron gas},}\ }\href {\doibase 10.1103/PhysRev.136.B864} {\bibfield
  {journal} {\bibinfo  {journal} {Phys. Rev.}\ }\textbf {\bibinfo {volume}
  {136}},\ \bibinfo {pages} {B864} (\bibinfo {year} {1964})}\BibitemShut
  {NoStop}%
\bibitem [{\citenamefont {L\"uders}\ \emph {et~al.}(2005)\citenamefont
  {L\"uders}, \citenamefont {Marques}, \citenamefont {Lathiotakis},
  \citenamefont {Floris}, \citenamefont {Profeta}, \citenamefont {Fast},
  \citenamefont {Continenza}, \citenamefont {Massidda},\ and\ \citenamefont
  {Gross}}]{SCDFT1}%
  \BibitemOpen
  \bibfield  {author} {\bibinfo {author} {\bibfnamefont {M.}~\bibnamefont
  {L\"uders}}, \bibinfo {author} {\bibfnamefont {M.~A.~L.}\ \bibnamefont
  {Marques}}, \bibinfo {author} {\bibfnamefont {N.~N.}\ \bibnamefont
  {Lathiotakis}}, \bibinfo {author} {\bibfnamefont {A.}~\bibnamefont {Floris}},
  \bibinfo {author} {\bibfnamefont {G.}~\bibnamefont {Profeta}}, \bibinfo
  {author} {\bibfnamefont {L.}~\bibnamefont {Fast}}, \bibinfo {author}
  {\bibfnamefont {A.}~\bibnamefont {Continenza}}, \bibinfo {author}
  {\bibfnamefont {S.}~\bibnamefont {Massidda}}, \ and\ \bibinfo {author}
  {\bibfnamefont {E.~K.~U.}\ \bibnamefont {Gross}},\ }\bibfield  {title}
  {\enquote {\bibinfo {title} {{Ab initio theory of superconductivity. I.
  Density functional formalism and approximate functionals}},}\ }\href
  {\doibase 10.1103/PhysRevB.72.024545} {\bibfield  {journal} {\bibinfo
  {journal} {Phys. Rev. B}\ }\textbf {\bibinfo {volume} {72}},\ \bibinfo
  {pages} {024545} (\bibinfo {year} {2005})}\BibitemShut {NoStop}%
\bibitem [{\citenamefont {Marques}\ \emph {et~al.}(2005)\citenamefont
  {Marques}, \citenamefont {L\"uders}, \citenamefont {Lathiotakis},
  \citenamefont {Profeta}, \citenamefont {Floris}, \citenamefont {Fast},
  \citenamefont {Continenza}, \citenamefont {Gross},\ and\ \citenamefont
  {Massidda}}]{SCDFT2}%
  \BibitemOpen
  \bibfield  {author} {\bibinfo {author} {\bibfnamefont {M.~A.~L.}\
  \bibnamefont {Marques}}, \bibinfo {author} {\bibfnamefont {M.}~\bibnamefont
  {L\"uders}}, \bibinfo {author} {\bibfnamefont {N.~N.}\ \bibnamefont
  {Lathiotakis}}, \bibinfo {author} {\bibfnamefont {G.}~\bibnamefont
  {Profeta}}, \bibinfo {author} {\bibfnamefont {A.}~\bibnamefont {Floris}},
  \bibinfo {author} {\bibfnamefont {L.}~\bibnamefont {Fast}}, \bibinfo {author}
  {\bibfnamefont {A.}~\bibnamefont {Continenza}}, \bibinfo {author}
  {\bibfnamefont {E.~K.~U.}\ \bibnamefont {Gross}}, \ and\ \bibinfo {author}
  {\bibfnamefont {S.}~\bibnamefont {Massidda}},\ }\bibfield  {title} {\enquote
  {\bibinfo {title} {{Ab initio theory of superconductivity. II. Application to
  elemental metals}},}\ }\href {\doibase 10.1103/PhysRevB.72.024546} {\bibfield
   {journal} {\bibinfo  {journal} {Phys. Rev. B}\ }\textbf {\bibinfo {volume}
  {72}},\ \bibinfo {pages} {024546} (\bibinfo {year} {2005})}\BibitemShut
  {NoStop}%
\bibitem [{\citenamefont {Sanna}\ \emph {et~al.}(2007)\citenamefont {Sanna},
  \citenamefont {Profeta}, \citenamefont {Floris}, \citenamefont {Marini},
  \citenamefont {Gross},\ and\ \citenamefont {Massidda}}]{cac6scdft}%
  \BibitemOpen
  \bibfield  {author} {\bibinfo {author} {\bibfnamefont {A.}~\bibnamefont
  {Sanna}}, \bibinfo {author} {\bibfnamefont {G.}~\bibnamefont {Profeta}},
  \bibinfo {author} {\bibfnamefont {A.}~\bibnamefont {Floris}}, \bibinfo
  {author} {\bibfnamefont {A.}~\bibnamefont {Marini}}, \bibinfo {author}
  {\bibfnamefont {E.~K.~U.}\ \bibnamefont {Gross}}, \ and\ \bibinfo {author}
  {\bibfnamefont {S.}~\bibnamefont {Massidda}},\ }\bibfield  {title} {\enquote
  {\bibinfo {title} {{Anisotropic gap of superconducting ${\mathrm{CaC}}_{6}$:
  A first-principles density functional calculation}},}\ }\href {\doibase
  10.1103/PhysRevB.75.020511} {\bibfield  {journal} {\bibinfo  {journal} {Phys.
  Rev. B}\ }\textbf {\bibinfo {volume} {75}},\ \bibinfo {pages} {020511}
  (\bibinfo {year} {2007})}\BibitemShut {NoStop}%
\bibitem [{\citenamefont {Flores-Livas}\ \emph {et~al.}(2012)\citenamefont
  {Flores-Livas}, \citenamefont {Amsler}, \citenamefont {Lenosky},
  \citenamefont {Lehtovaara}, \citenamefont {Botti}, \citenamefont {Marques},\
  and\ \citenamefont {Goedecker}}]{dissilanescdftpressure}%
  \BibitemOpen
  \bibfield  {author} {\bibinfo {author} {\bibfnamefont {J.~A.}\ \bibnamefont
  {Flores-Livas}}, \bibinfo {author} {\bibfnamefont {M.}~\bibnamefont
  {Amsler}}, \bibinfo {author} {\bibfnamefont {T.~J.}\ \bibnamefont {Lenosky}},
  \bibinfo {author} {\bibfnamefont {L.}~\bibnamefont {Lehtovaara}}, \bibinfo
  {author} {\bibfnamefont {S.}~\bibnamefont {Botti}}, \bibinfo {author}
  {\bibfnamefont {M.~A.~L.}\ \bibnamefont {Marques}}, \ and\ \bibinfo {author}
  {\bibfnamefont {S.}~\bibnamefont {Goedecker}},\ }\bibfield  {title} {\enquote
  {\bibinfo {title} {High-pressure structures of disilane and their
  superconducting properties},}\ }\href {\doibase
  10.1103/PhysRevLett.108.117004} {\bibfield  {journal} {\bibinfo  {journal}
  {Phys. Rev. Lett.}\ }\textbf {\bibinfo {volume} {108}},\ \bibinfo {pages}
  {117004} (\bibinfo {year} {2012})}\BibitemShut {NoStop}%
\bibitem [{\citenamefont {Floris}\ \emph {et~al.}(2007)\citenamefont {Floris},
  \citenamefont {Sanna}, \citenamefont {Massidda},\ and\ \citenamefont
  {Gross}}]{PBSCDFT}%
  \BibitemOpen
  \bibfield  {author} {\bibinfo {author} {\bibfnamefont {A.}~\bibnamefont
  {Floris}}, \bibinfo {author} {\bibfnamefont {A.}~\bibnamefont {Sanna}},
  \bibinfo {author} {\bibfnamefont {S.}~\bibnamefont {Massidda}}, \ and\
  \bibinfo {author} {\bibfnamefont {E.~K.~U.}\ \bibnamefont {Gross}},\
  }\bibfield  {title} {\enquote {\bibinfo {title} {Two-band superconductivity
  in pb from ab initio calculations},}\ }\href {\doibase
  10.1103/PhysRevB.75.054508} {\bibfield  {journal} {\bibinfo  {journal} {Phys.
  Rev. B}\ }\textbf {\bibinfo {volume} {75}},\ \bibinfo {pages} {054508}
  (\bibinfo {year} {2007})}\BibitemShut {NoStop}%
\bibitem [{\citenamefont {Profeta}\ \emph {et~al.}(2006)\citenamefont
  {Profeta}, \citenamefont {Franchini}, \citenamefont {Lathiotakis},
  \citenamefont {Floris}, \citenamefont {Sanna}, \citenamefont {Marques},
  \citenamefont {L\"uders}, \citenamefont {Massidda}, \citenamefont {Gross},\
  and\ \citenamefont {Continenza}}]{metalsscdftpressure}%
  \BibitemOpen
  \bibfield  {author} {\bibinfo {author} {\bibfnamefont {G.}~\bibnamefont
  {Profeta}}, \bibinfo {author} {\bibfnamefont {C.}~\bibnamefont {Franchini}},
  \bibinfo {author} {\bibfnamefont {N.~N.}\ \bibnamefont {Lathiotakis}},
  \bibinfo {author} {\bibfnamefont {A.}~\bibnamefont {Floris}}, \bibinfo
  {author} {\bibfnamefont {A.}~\bibnamefont {Sanna}}, \bibinfo {author}
  {\bibfnamefont {M.~A.~L.}\ \bibnamefont {Marques}}, \bibinfo {author}
  {\bibfnamefont {M.}~\bibnamefont {L\"uders}}, \bibinfo {author}
  {\bibfnamefont {S.}~\bibnamefont {Massidda}}, \bibinfo {author}
  {\bibfnamefont {E.~K.~U.}\ \bibnamefont {Gross}}, \ and\ \bibinfo {author}
  {\bibfnamefont {A.}~\bibnamefont {Continenza}},\ }\bibfield  {title}
  {\enquote {\bibinfo {title} {Superconductivity in lithium, potassium, and
  aluminum under extreme pressure: A first-principles study},}\ }\href
  {\doibase 10.1103/PhysRevLett.96.047003} {\bibfield  {journal} {\bibinfo
  {journal} {Phys. Rev. Lett.}\ }\textbf {\bibinfo {volume} {96}},\ \bibinfo
  {pages} {047003} (\bibinfo {year} {2006})}\BibitemShut {NoStop}%
\bibitem [{\citenamefont {Akashi}\ \emph {et~al.}(2015)\citenamefont {Akashi},
  \citenamefont {Kawamura}, \citenamefont {Tsuneyuki}, \citenamefont {Nomura},\
  and\ \citenamefont {Arita}}]{sulfurhybridesscdftpressure}%
  \BibitemOpen
  \bibfield  {author} {\bibinfo {author} {\bibfnamefont {R.}~\bibnamefont
  {Akashi}}, \bibinfo {author} {\bibfnamefont {M.}~\bibnamefont {Kawamura}},
  \bibinfo {author} {\bibfnamefont {S.}~\bibnamefont {Tsuneyuki}}, \bibinfo
  {author} {\bibfnamefont {Y.}~\bibnamefont {Nomura}}, \ and\ \bibinfo {author}
  {\bibfnamefont {R.}~\bibnamefont {Arita}},\ }\bibfield  {title} {\enquote
  {\bibinfo {title} {First-principles study of the pressure and
  crystal-structure dependences of the superconducting transition temperature
  in compressed sulfur hydrides},}\ }\href {\doibase
  10.1103/PhysRevB.91.224513} {\bibfield  {journal} {\bibinfo  {journal} {Phys.
  Rev. B}\ }\textbf {\bibinfo {volume} {91}},\ \bibinfo {pages} {224513}
  (\bibinfo {year} {2015})}\BibitemShut {NoStop}%
\bibitem [{\citenamefont {Flores-Livas}\ \emph {et~al.}(2016)\citenamefont
  {Flores-Livas}, \citenamefont {Sanna},\ and\ \citenamefont
  {Gross}}]{seleniumscdftpressure}%
  \BibitemOpen
  \bibfield  {author} {\bibinfo {author} {\bibfnamefont {J.~A.}\ \bibnamefont
  {Flores-Livas}}, \bibinfo {author} {\bibfnamefont {A.}~\bibnamefont {Sanna}},
  \ and\ \bibinfo {author} {\bibfnamefont {E.~K.U.}\ \bibnamefont {Gross}},\
  }\bibfield  {title} {\enquote {\bibinfo {title} {High temperature
  superconductivity in sulfur and selenium hydrides at high pressure},}\ }\href
  {\doibase 10.1140/epjb/e2016-70020-0} {\bibfield  {journal} {\bibinfo
  {journal} {Eur. Phys. J. B}\ }\textbf {\bibinfo {volume} {89}},\ \bibinfo
  {pages} {63} (\bibinfo {year} {2016})}\BibitemShut {NoStop}%
\bibitem [{\citenamefont {Engel}\ and\ \citenamefont {Dreizler}(2011)}]{Engel}%
  \BibitemOpen
  \bibfield  {author} {\bibinfo {author} {\bibfnamefont {E.}~\bibnamefont
  {Engel}}\ and\ \bibinfo {author} {\bibfnamefont {R.}~\bibnamefont
  {Dreizler}},\ }\href@noop {} {\emph {\bibinfo {title} {{Density Functional
  Theory: An Advance Course}}}}\ (\bibinfo  {publisher} {Springer},\ \bibinfo
  {address} {Heidelberg},\ \bibinfo {year} {2011})\BibitemShut {NoStop}%
\bibitem [{\citenamefont {van Leeuwen}(2003)}]{VANLEEUWEN200325}%
  \BibitemOpen
  \bibfield  {author} {\bibinfo {author} {\bibfnamefont {Robert}\ \bibnamefont
  {van Leeuwen}},\ }\bibfield  {title} {\enquote {\bibinfo {title} {Density
  functional approach to the many-body problem: Key concepts and exact
  functionals},}\ }\href {\doibase
  https://doi.org/10.1016/S0065-3276(03)43002-5} {\bibfield  {journal}
  {\bibinfo  {journal} {Adv. Quantum Chem.}\ }\textbf {\bibinfo {volume}
  {43}},\ \bibinfo {pages} {25} (\bibinfo {year} {2003})}\BibitemShut {NoStop}%
\bibitem [{\citenamefont {Pribram-Jones}\ \emph {et~al.}(2014)\citenamefont
  {Pribram-Jones}, \citenamefont {Pittalis}, \citenamefont {Gross},\ and\
  \citenamefont {Burke}}]{10.1007/978-3-319-04912-0_2}%
  \BibitemOpen
  \bibfield  {author} {\bibinfo {author} {\bibfnamefont {A.}~\bibnamefont
  {Pribram-Jones}}, \bibinfo {author} {\bibfnamefont {S.}~\bibnamefont
  {Pittalis}}, \bibinfo {author} {\bibfnamefont {E.~K.~U.}\ \bibnamefont
  {Gross}}, \ and\ \bibinfo {author} {\bibfnamefont {K.}~\bibnamefont
  {Burke}},\ }\bibfield  {title} {\enquote {\bibinfo {title} {Thermal density
  functional theory in context},}\ }\href {\doibase 10.1103/PhysRev.136.B864}
  {\bibfield  {journal} {\bibinfo  {journal} {Lecture Notes in Comput. Sci.}\
  }\textbf {\bibinfo {volume} {124}},\ \bibinfo {pages} {25} (\bibinfo {year}
  {2014})}\BibitemShut {NoStop}%
\bibitem [{\citenamefont {Levy}(1979)}]{Levy6062}%
  \BibitemOpen
  \bibfield  {author} {\bibinfo {author} {\bibfnamefont {M.}~\bibnamefont
  {Levy}},\ }\bibfield  {title} {\enquote {\bibinfo {title} {Universal
  variational functionals of electron densities, first-order density matrices,
  and natural spin-orbitals and solution of the $v$-representability
  problem},}\ }\href {\doibase 10.1073/pnas.76.12.6062} {\bibfield  {journal}
  {\bibinfo  {journal} {Proc. Natl. Acad. Sci.}\ }\textbf {\bibinfo {volume}
  {76}},\ \bibinfo {pages} {6062} (\bibinfo {year} {1979})}\BibitemShut
  {NoStop}%
\bibitem [{\citenamefont {Lieb}(1983)}]{doi:10.1002/qua.560240302}%
  \BibitemOpen
  \bibfield  {author} {\bibinfo {author} {\bibfnamefont {E.~H.}\ \bibnamefont
  {Lieb}},\ }\bibfield  {title} {\enquote {\bibinfo {title} {Density
  functionals for coulomb systems},}\ }\href {\doibase 10.1002/qua.560240302}
  {\bibfield  {journal} {\bibinfo  {journal} {Int. J. Quantum Chem.}\ }\textbf
  {\bibinfo {volume} {24}},\ \bibinfo {pages} {243} (\bibinfo {year}
  {1983})}\BibitemShut {NoStop}%
\bibitem [{\citenamefont {Capelle}\ and\ \citenamefont
  {Vignale}(2001)}]{PhysRevLett.86.5546}%
  \BibitemOpen
  \bibfield  {author} {\bibinfo {author} {\bibfnamefont {K.}~\bibnamefont
  {Capelle}}\ and\ \bibinfo {author} {\bibfnamefont {G.}~\bibnamefont
  {Vignale}},\ }\bibfield  {title} {\enquote {\bibinfo {title} {Nonuniqueness
  of the potentials of spin-density-functional theory},}\ }\href {\doibase
  10.1103/PhysRevLett.86.5546} {\bibfield  {journal} {\bibinfo  {journal}
  {Phys. Rev. Lett.}\ }\textbf {\bibinfo {volume} {86}},\ \bibinfo {pages}
  {5546--5549} (\bibinfo {year} {2001})}\BibitemShut {NoStop}%
\bibitem [{\citenamefont {Gilbert}(1975)}]{PhysRevB.12.2111}%
  \BibitemOpen
  \bibfield  {author} {\bibinfo {author} {\bibfnamefont {T.~L.}\ \bibnamefont
  {Gilbert}},\ }\bibfield  {title} {\enquote {\bibinfo {title} {{Hohenberg-Kohn
  theorem for nonlocal external potentials}},}\ }\href {\doibase
  10.1103/PhysRevB.12.2111} {\bibfield  {journal} {\bibinfo  {journal} {Phys.
  Rev. B}\ }\textbf {\bibinfo {volume} {12}},\ \bibinfo {pages} {2111}
  (\bibinfo {year} {1975})}\BibitemShut {NoStop}%
\bibitem [{\citenamefont {Coleman}(1963)}]{Col2}%
  \BibitemOpen
  \bibfield  {author} {\bibinfo {author} {\bibfnamefont {A.~J.}\ \bibnamefont
  {Coleman}},\ }\bibfield  {title} {\enquote {\bibinfo {title} {Structure of
  fermion density matrices},}\ }\href {\doibase 10.1103/RevModPhys.35.668}
  {\bibfield  {journal} {\bibinfo  {journal} {Rev. Mod. Phys.}\ }\textbf
  {\bibinfo {volume} {35}},\ \bibinfo {pages} {668--686} (\bibinfo {year}
  {1963})}\BibitemShut {NoStop}%
\bibitem [{\citenamefont {Klyachko}(2006)}]{Kly2}%
  \BibitemOpen
  \bibfield  {author} {\bibinfo {author} {\bibfnamefont {A.}~\bibnamefont
  {Klyachko}},\ }\bibfield  {title} {\enquote {\bibinfo {title} {Quantum
  marginal problem and {$N$}-repre\-sen\-tability},}\ }\href
  {http://stacks.iop.org/1742-6596/36/i=1/a=014} {\bibfield  {journal}
  {\bibinfo  {journal} {J. Phys.}\ }\textbf {\bibinfo {volume} {36}},\ \bibinfo
  {pages} {72} (\bibinfo {year} {2006})}\BibitemShut {NoStop}%
\bibitem [{\citenamefont {Schilling}\ \emph {et~al.}(2017)\citenamefont
  {Schilling}, \citenamefont {Benavides-Riveros},\ and\ \citenamefont
  {Vrana}}]{SBV}%
  \BibitemOpen
  \bibfield  {author} {\bibinfo {author} {\bibfnamefont {C.}~\bibnamefont
  {Schilling}}, \bibinfo {author} {\bibfnamefont {C.~L.}\ \bibnamefont
  {Benavides-Riveros}}, \ and\ \bibinfo {author} {\bibfnamefont
  {P.}~\bibnamefont {Vrana}},\ }\bibfield  {title} {\enquote {\bibinfo {title}
  {Reconstructing quantum states from single-party information},}\ }\href
  {\doibase 10.1103/PhysRevA.96.052312} {\bibfield  {journal} {\bibinfo
  {journal} {Phys. Rev. A}\ }\textbf {\bibinfo {volume} {96}},\ \bibinfo
  {pages} {052312} (\bibinfo {year} {2017})}\BibitemShut {NoStop}%
\bibitem [{\citenamefont {Lieb}\ and\ \citenamefont {Seiringer}(2010)}]{Seir}%
  \BibitemOpen
  \bibfield  {author} {\bibinfo {author} {\bibfnamefont {E.}~\bibnamefont
  {Lieb}}\ and\ \bibinfo {author} {\bibfnamefont {R.}~\bibnamefont
  {Seiringer}},\ }\href@noop {} {\emph {\bibinfo {title} {{Stability of Matter
  in Quantum Mecahnics}}}}\ (\bibinfo  {publisher} {Cambridge University
  Press},\ \bibinfo {year} {2010})\BibitemShut {NoStop}%
\bibitem [{\citenamefont {Bach}\ \emph {et~al.}(1994)\citenamefont {Bach},
  \citenamefont {Lieb},\ and\ \citenamefont {Solovej}}]{Bach1994}%
  \BibitemOpen
  \bibfield  {author} {\bibinfo {author} {\bibfnamefont {V.}~\bibnamefont
  {Bach}}, \bibinfo {author} {\bibfnamefont {E.~H.}\ \bibnamefont {Lieb}}, \
  and\ \bibinfo {author} {\bibfnamefont {J.~P.}\ \bibnamefont {Solovej}},\
  }\bibfield  {title} {\enquote {\bibinfo {title} {{Generalized Har\-tree-Fock
  theory and the Hubbard model}},}\ }\href {\doibase 10.1007/BF02188656}
  {\bibfield  {journal} {\bibinfo  {journal} {J. Stat. Phys.}\ }\textbf
  {\bibinfo {volume} {76}},\ \bibinfo {pages} {3--89} (\bibinfo {year}
  {1994})}\BibitemShut {NoStop}%
\bibitem [{\citenamefont {Bach}\ \emph {et~al.}(2014)\citenamefont {Bach},
  \citenamefont {Breteaux}, \citenamefont {Knörr},\ and\ \citenamefont
  {Menge}}]{doi:10.1063/1.4853875}%
  \BibitemOpen
  \bibfield  {author} {\bibinfo {author} {\bibfnamefont {V.}~\bibnamefont
  {Bach}}, \bibinfo {author} {\bibfnamefont {S.}~\bibnamefont {Breteaux}},
  \bibinfo {author} {\bibfnamefont {H.~K.}\ \bibnamefont {Knörr}}, \ and\
  \bibinfo {author} {\bibfnamefont {E.}~\bibnamefont {Menge}},\ }\bibfield
  {title} {\enquote {\bibinfo {title} {{Generalization of Lieb's variational
  principle to Bogoliubov–Hartree–Fock theory}},}\ }\href {\doibase
  10.1063/1.4853875} {\bibfield  {journal} {\bibinfo  {journal} {J. Math.
  Phys.}\ }\textbf {\bibinfo {volume} {55}},\ \bibinfo {pages} {012101}
  (\bibinfo {year} {2014})}\BibitemShut {NoStop}%
\bibitem [{\citenamefont {Hainzl}\ \emph {et~al.}(2008)\citenamefont {Hainzl},
  \citenamefont {Hamza}, \citenamefont {Seiringer},\ and\ \citenamefont
  {Solovej}}]{Hainzl2008}%
  \BibitemOpen
  \bibfield  {author} {\bibinfo {author} {\bibfnamefont {C.}~\bibnamefont
  {Hainzl}}, \bibinfo {author} {\bibfnamefont {E.}~\bibnamefont {Hamza}},
  \bibinfo {author} {\bibfnamefont {R.}~\bibnamefont {Seiringer}}, \ and\
  \bibinfo {author} {\bibfnamefont {J.~P.}\ \bibnamefont {Solovej}},\
  }\bibfield  {title} {\enquote {\bibinfo {title} {{The BCS Functional for
  General Pair Interactions}},}\ }\href {\doibase 10.1007/s00220-008-0489-2}
  {\bibfield  {journal} {\bibinfo  {journal} {Commun. Math. Phys.}\ }\textbf
  {\bibinfo {volume} {281}},\ \bibinfo {pages} {349} (\bibinfo {year}
  {2008})}\BibitemShut {NoStop}%
\end{thebibliography}%

\onecolumngrid

\newpage

\appendix 

\section{SUPPLEMENTAL MATERIAL}

In this Supplemental Material we provide the proof of two 
statements made in the main text. First, that for the 
homogeneous electron gas the generalized one-particle reduced 
density operator $\Gamma(\lambda)$ is idempotent in first order 
perturbation theory. Since the infimum of the energy for the 
Bogo\-liu\-bov-de Gennes Hamiltonian is reached by a pure state, 
with an idempotent generalized one-particle reduced density matrix, 
this can explain why SCDFT yields good results for many 
superconductors. Second, that $\Gamma(\lambda)$ is not 
idempotent in second order. 

The Hamiltonian we consider here is $\hat H = \hat H_\text{s} + \lambda 
\hat W_{\rm e-e}$, where $\hat H_\text{s}$ is the noninteracting Bogo\-liu\-bov-de 
Gennes Ha\-mil\-tonian introduced in Eq.~\eqref{eq:hamiltonian} and 
$\hat W_{\rm e-e}$ is the electronic repulsion operator.
Let us use the well-known Bogoliubov transformations for the electron 
creation and annihilation operators ($\hat c_{k\sigma}$,$\hat 
c^\dagger_{k\sigma}$), namely,
\begin{align}
\hat c_{k\uparrow}
 &= u^*_k \hat \gamma_{k0} + v_k \hat \gamma^\dagger_{k1}, \\
\hat c_{-k\downarrow}^\dagger &= -v^*_k \hat \gamma_{k0} + u_k \hat 
\gamma^\dagger_{k1} .
\end{align}
The multiplication of creation and annihilation electron 
operators reads in the quasiparticle operators:
\begin{align}
\label{eq:ap1}
\hat c^\dagger_{k\uparrow} \hat c_{k\uparrow} &= |u_k|^2 
\hat \gamma^\dagger_{k0}\hat \gamma_{k0}
+ u_k v_k \hat \gamma^\dagger_{k0}\hat \gamma^\dagger_{k1} +
u_k^* v_k^* \hat\gamma_{k1}\hat\gamma_{k0}
+ |v_k|^2 \hat\gamma_{k1}\hat\gamma_{k1}^\dagger \\
\hat c_{k\uparrow} \hat c_{-k\downarrow} &= -u^*_k v_k \hat\gamma_{k0}
\hat\gamma^\dagger_{k0} + u_k^{*2} \hat\gamma_{k0}\hat\gamma_{k1} - v_k^{2} 
\hat\gamma^\dagger_{k1}\hat\gamma^\dagger_{k0}
+ v_k u^*_k \hat\gamma^\dagger_{k1}\hat\gamma_{k1} .
\label{eq:ap2}
\end{align}
At zero order (i.e, $\lambda = 0$) the ground state is $\ket{\rm BCS}$,
the quasiparticle vacuum or more commonly the BCS state. 
The densities at zero order are given by 
\begin{align}
\rho^{(0)}_{kk} &\equiv \bra{\rm BCS}\hat c^\dagger_{k\uparrow} \hat c_{k\uparrow}\ket{\rm BCS}
= |v_k|^2 \\
\chi^{(0)}_{kk} &\equiv \bra{\rm BCS}\hat c_{k\uparrow} \hat c_{-k\downarrow}\ket{\rm BCS}
= -u_k^* v_k
\end{align}
Notice that
\begin{align}
(\rho^{(0)}_{kk})^2 + 
|\chi^{(0)}_{kk}|^2 
= |v_k|^4 + |v_k|^2 |u_k|^2 = \rho_{kk}^0
\end{align}
and trivially $\rho^{(0)}_{kk} \chi^{(0)}_{kk} + (1-\rho^{(0)}_{kk}) \chi^{(0)}_{kk} 
= \chi^{(0)}_{kk}$. These are nothing more than the conditions for 
the idempotency at zero order, a quite well known result in superconductivity.

The particular combination of field-operators that enter the Coulomb interaction terms and produce relevant terms for the 1RDM/anomalous density is
\begin{align}
\label{eq:eeint}
& \hat c^\dagger_{(k+q)\uparrow} \hat c^\dagger_{(k'-q)\downarrow}
\hat c_{k'\downarrow}\hat c_{k\uparrow} =
\\ &\qquad(u_{k+q}\hat\gamma^\dagger_{(k+q)0} + v^*_{k+q}\hat\gamma_{(k+q)1})
(-v^*_{-(k'-q)}\hat\gamma_{-(k'-q)0} + u_{-(k'-q)}\hat\gamma^\dagger_{-(k'-q)1})
(-v_{-k'}\hat\gamma^\dagger_{-k'0} + u^*_{-k'}\hat\gamma_{-k'1}) 
(u^*_k\hat\gamma_{k0} + v_k\hat\gamma^\dagger_{k1}) ,
\nonumber
\end{align}
and expanding the products yields
\begin{multline}
\hat c^\dagger_{(k+q)\uparrow} \hat c^\dagger_{(k'-q)\downarrow}
\hat c_{k'\downarrow}\hat c_{k\uparrow}
= \\
+ u_{k+q} v^*_{-(k'-q)}v_{-k'} v_k
\hat\gamma^\dagger_{(k+q)0}\gamma_{-(k'-q)0}\hat\gamma^\dagger_{-k'0}\hat\gamma^\dagger_{k1}
- u_{k+q} v^*_{-(k'-q)}u^*_{-k'} v_k
\hat\gamma^\dagger_{(k+q)0}\gamma_{-(k'-q)0}\hat\gamma_{-k'1}\hat\gamma^\dagger_{k1}
\\
- u_{k+q} u_{-(k'-q)}v_{-k'} v_k
\hat\gamma^\dagger_{(k+q)0}\gamma^\dagger_{-(k'-q)1}\hat\gamma^\dagger_{-k'0}\hat\gamma^\dagger_{k1}
+ u_{k+q} u_{-(k'-q)}u^*_{-k'} v_k
\hat\gamma^\dagger_{(k+q)0}\gamma^\dagger_{-(k'-q)1}\hat\gamma_{-k'1}\hat\gamma^\dagger_{k1}
\\
+  v^*_{k+q} v^*_{-(k'-q)}v_{-k'} v_k 
\hat\gamma_{(k+q)1}\gamma_{-(k'-q)0}\hat\gamma^\dagger_{-k'0}\hat\gamma^\dagger_{k1}
- v^*_{k+q} v^*_{-(k'-q)}u^*_{-k'} v_k
\hat\gamma_{(k+q)1}\gamma_{-(k'-q)0}\hat\gamma_{-k'1}\hat\gamma^\dagger_{k1}
\\
- v^*_{k+q} u_{-(k'-q)}v_{-k'} v_k
\hat\gamma_{(k+q)1}\gamma^\dagger_{-(k'-q)1}\hat\gamma^\dagger_{-k'0}\hat\gamma^\dagger_{k1}
+ v^*_{k+q} u_{-(k'-q)}u^*_{-k'} v_k
\hat\gamma_{(k+q)1}\gamma^\dagger_{-(k'-q)1}\hat\gamma_{-k'1}\hat\gamma^\dagger_{k1}
\\
+ u_{k+q} v^*_{-(k'-q)}v_{-k'} u^*_k
\hat\gamma^\dagger_{(k+q)0}\gamma_{-(k'-q)0}\hat\gamma^\dagger_{-k'0}\hat\gamma_{k0}
- u_{k+q} v^*_{-(k'-q)}u^*_{-k'} u^*_k
\hat\gamma^\dagger_{(k+q)0}\gamma_{-(k'-q)0}\hat\gamma_{-k'1}\hat\gamma_{k0} 
\\
- u_{k+q} u_{-(k'-q)}v_{-k'} u^*_k
\hat\gamma^\dagger_{(k+q)0}\gamma^\dagger_{-(k'-q)1}\hat\gamma^\dagger_{-k'0}\hat\gamma_{k0}
+ u_{k+q} u_{-(k'-q)}u^*_{-k'} u^*_k
\hat\gamma^\dagger_{(k+q)0}\gamma^\dagger_{-(k'-q)1}\hat\gamma_{-k'1}\hat\gamma_{k0}
\\
+  v^*_{k+q} v^*_{-(k'-q)}v_{-k'} u^*_k
\hat\gamma_{(k+q)1}\gamma_{-(k'-q)0}\hat\gamma^\dagger_{-k'0}\hat\gamma_{k0}
- v^*_{k+q} v^*_{-(k'-q)}u^*_{-k'} u^*_k
\hat\gamma_{(k+q)1}\gamma_{-(k'-q)0}\hat\gamma_{-k'1}\hat\gamma_{k0}
\\
- v^*_{k+q} u_{-(k'-q)}v_{-k'} u^*_k
\hat\gamma_{(k+q)1}\gamma^\dagger_{-(k'-q)1}\hat\gamma^\dagger_{-k'0}\hat\gamma_{k0}
+ v^*_{k+q} u_{-(k'-q)}u^*_{-k'} u^*_k
\hat\gamma_{(k+q)1}\gamma^\dagger_{-(k'-q)1}\hat\gamma_{-k'1}\hat\gamma_{k0}.
\end{multline}

We are interested in the expectation value between the BCS state 
and the excited states of the Bogo\-liu\-bov-de Gennes Hamiltonian 
(say, $\ket{s}$). We have therefore
\begin{multline}
\bra{s}\hat c^\dagger_{(k+q)\uparrow} \hat 
c^\dagger_{(k'-q)\downarrow} 
\hat c_{k'\downarrow}\hat c_{k\uparrow}\ket{\rm BCS} = \\
u_{k+q} v^*_{-(k'-q)}v_{-k'} v_k \bra{s}
\hat\gamma^\dagger_{(k+q)0}\gamma_{-(k'-q)0}\hat\gamma^\dagger_{-k'0}\hat\gamma^\dagger_{k1}
\ket{\rm BCS}
- u_{k+q} u_{-(k'-q)}v_{-k'} v_k
\bra{s}\hat\gamma^\dagger_{(k+q)0}\gamma^\dagger_{-(k'-q)1}\hat\gamma^\dagger_{-k'0}\hat\gamma^\dagger_{k1}\ket{\rm BCS}
\\
+u_{k+q} u_{-(k'-q)}u^*_{-k'} v_k
\bra{s}
\gamma^\dagger_{(k+q)0}\gamma^\dagger_{-(k'-q)1}\hat\gamma_{-k'1}\hat\gamma^\dagger_{k1}
\ket{\rm BCS}
- v^*_{k+q} u_{-(k'-q)}v_{-k'} v_k
\bra{s}\hat\gamma_{(k+q)1}\gamma^\dagger_{-(k'-q)1}\hat\gamma^\dagger_{-k'0}
\hat\gamma^\dagger_{k1}\ket{\rm BCS}
\end{multline}
that reduces to
\begin{multline}
\bra{s}\hat c^\dagger_{(k+q)\uparrow} \hat c^\dagger_{(k'-q)\downarrow} 
\hat c_{k'\downarrow}\hat c_{k\uparrow}\ket{\rm BCS}
= \delta^0_q
u_{k} |v_{-k'}|^2 v_k
\bra{s}
\hat\gamma^\dagger_{k0}\hat\gamma^\dagger_{k1}\ket{\rm BCS}
+ \delta_{-k'}^k
u_{k+q} u_{k+q}u^*_{k} v_k
\bra{s}
\gamma^\dagger_{(k+q)0}\gamma^\dagger_{(k+q)1}\ket{\rm BCS}
\\
- \delta_{-k'}^k v^*_{k+q} u_{k+q}v_{k} v_k
\bra{s}\hat\gamma^\dagger_{k0}\hat\gamma^\dagger_{k1}\ket{\rm BCS}
+ \delta^0_q|v_{k}|^2 u_{-k'}v_{-k'} 
\bra{s}\gamma^\dagger_{-k'1}\hat\gamma^\dagger_{-k'0}\ket{\rm BCS}
\\ 
- u_{k+q} u_{-(k'-q)}v_{-k'} v_k
\bra{s}\hat\gamma^\dagger_{(k+q)0}\gamma^\dagger_{-(k'-q)1}\hat\gamma^\dagger_{-k'0}\hat\gamma^\dagger_{k1}\ket{\rm BCS}.
\end{multline}
For the homogeneous electron gas we only have to consider $q \neq
0$ which reduces the contributions to solely:
\begin{align}
\label{eq:red}
&\bra{s}c^\dagger_{(k+q)\uparrow} c^\dagger_{(k'-q)\downarrow}c_{k'\downarrow}c_{k\uparrow}
\ket{\rm BCS}
\\ &= \delta_{-k'}^k (u_{k+q} u_{k+q}u^*_{k} v_k
\bra{s}
\gamma^\dagger_{(k+q)0}\gamma^\dagger_{(k+q)1}\ket{\rm BCS} \nonumber
- v^*_{k+q} u_{k+q}v_{k} v_k
\bra{s}\hat\gamma^\dagger_{k0}\hat\gamma^\dagger_{k1}\ket{\rm BCS})
\nonumber 
\\ 
&- u_{k+q} u_{-(k'-q)}v_{-k'} v_k
\bra{s}\hat\gamma^\dagger_{(k+q)0}\gamma^\dagger_{-(k'-q)1}\hat\gamma^\dagger_{-k'0}\hat\gamma^\dagger_{k1}\ket{\rm BCS} \nonumber 
\\ &= \delta_{-k'}^k (u_{k}^2 u^*_{k-q} v_{k-q}
- v^*_{k+q} u_{k+q}v^2_{k}) \bra{s}\hat\gamma^\dagger_{k0}\hat\gamma^\dagger_{k1}\ket{\rm BCS}
\nonumber
\\ 
&- u_{k+q} u_{-(k'-q)}v_{-k'} v_k
\bra{s}\hat\gamma^\dagger_{(k+q)0}\gamma^\dagger_{-(k'-q)1}\hat\gamma^\dagger_{-k'0}\hat\gamma^\dagger_{k1}\ket{\rm BCS}.
\nonumber 
\end{align}
Notice that in Eq.~\eqref{eq:red} only double and quadruple 
excitations of the vacuum show up in that expression. 
Single excitations of the vacuum, for instance, contribute 
nothing in first-order perturbation theory. With this result 
we can prove that 
\begin{thm}
$\Gamma(\lambda)$ is idempotent at first order.
\end{thm}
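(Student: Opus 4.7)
The plan is to verify the scalar identity, equivalent to $\Gamma(\lambda)^2 = \Gamma(\lambda)$ at order $\lambda$, that appears (without derivation) at the end of the main text. Writing $\Gamma(\lambda) = \Gamma^{(0)} + \lambda\Gamma^{(1)} + \mathcal{O}(\lambda^2)$, the zero-order piece is already idempotent because the BCS state is pure quasifree, so what remains is $\Gamma^{(0)}\Gamma^{(1)} + \Gamma^{(1)}\Gamma^{(0)} = \Gamma^{(1)}$. In the translation-invariant, spin-symmetric homogeneous gas, and with the zero-order assignments $\rho_k=|v_k|^2$, $\chi_k=-u_k^*v_k$ satisfying $\rho_k=\rho_k^2+|\chi_k|^2$ from Eq.~\eqref{Eq.initial}, only the $(1,1)$ block of this matrix equation is non-trivial; the off-diagonal and $(2,2)$ blocks are either automatic or redundant. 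For each $k$ it reduces to the scalar condition $2\rho_k\rho^{(1)}_k + \chi^*_k\chi^{(1)}_k + \chi_k\chi^{(1)*}_k = \rho^{(1)}_k$.

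First I would compute $\rho^{(1)}_k$ and $\chi^{(1)}_k$ from $\ket{0_\lambda}$ in Eq.~\eqref{eq:mod}. Writing $\rho'_k = \bra{0_\lambda}\hat c^\dagger_{k\uparrow}\hat c_{k\uparrow}\ket{0_\lambda}$ and $\chi'_k = \bra{0_\lambda}\hat c_{k\uparrow}\hat c_{-k\downarrow}\ket{0_\lambda}$, and inserting the Bogoliubov expansions~\eqref{eq:ap1}--\eqref{eq:ap2}, I collect the order-$\lambda$ cross-terms between $\bra{0}$ and the correction. The decisive observation is that $\hat c^\dagger\hat c$ and $\hat c\hat c$ change the quasiparticle number by at most two, so they cannot connect $\ket{0}$ to the quadruple-excitation sector with amplitude $d_{qkk'}$; that sector contributes only at $\mathcal{O}(\lambda^2)$ to the one-body densities. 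Only the double-excitation piece $b_k\hat\gamma^\dagger_{k0}\hat\gamma^\dagger_{k1}\ket{0}$ matters, paired with the terms $u_k v_k\,\hat\gamma^\dagger_{k0}\hat\gamma^\dagger_{k1}$ inside $\hat c^\dagger_{k\uparrow}\hat c_{k\uparrow}$ and the terms $u_k^{*2}\hat\gamma_{k0}\hat\gamma_{k1}$ and $-v_k^{2}\hat\gamma^\dagger_{k1}\hat\gamma^\dagger_{k0}$ inside $\hat c_{k\uparrow}\hat c_{-k\downarrow}$. This yields precisely $\rho^{(1)}_k = b_k u_k^* v_k^* + b_k^* u_k v_k$ and $\chi^{(1)}_k = -b_k u_k^{*2} + b_k^* v_k^2$, as recorded in the main text.

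Next I would substitute these expressions, together with $\rho_k=|v_k|^2$ and $\chi_k=-u_k^*v_k$, into the first-order identity. A short manipulation, in which the two conjugate $\chi$-contributions combine to $(|u_k|^2-|v_k|^2)(b_k u_k^* v_k^* + b_k^* u_k v_k)$ and add to the diagonal piece $2|v_k|^2(b_k u_k^* v_k^* + b_k^* u_k v_k)$, collects the total as $(|u_k|^2+|v_k|^2)(b_k u_k^* v_k^* + b_k^* u_k v_k)$. Invoking the Bogoliubov normalization $|u_k|^2+|v_k|^2=1$ this equals $\rho^{(1)}_k$ exactly, establishing first-order idempotency.

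The main obstacle is not the concluding algebra, which is essentially forced by the Bogoliubov sum rule, but the clean elimination of the quadruple-excitation sector from the first-order densities. One must make the quasiparticle-parity argument precise enough that $d_{qkk'}$ produces no cross-term with $\bra{0}$ through a bilinear operator, and one must also check that within the double-excitation sector no hidden contributions arise from terms with $k\mapsto -k$ relabelings permitted by the pairing structure. Once that selection is carried out, the theorem reduces to the single identity above and is immediate.
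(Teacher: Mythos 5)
Your proposal is correct and follows essentially the same route as the paper: expand the first-order perturbed state into double and quadruple quasiparticle excitations, note that the bilinear operators in Eqs.~\eqref{eq:ap1}--\eqref{eq:ap2} cannot connect the vacuum to the quadruple sector so only the $b_k$ terms survive in $\rho^{(1)}_{kk}$ and $\chi^{(1)}_{kk}$, and then verify $2\rho^{(0)}_{kk}\rho^{(1)}_{kk}+(\chi^{(0)*}_{kk}\chi^{(1)}_{kk}+\text{c.c.})=\rho^{(1)}_{kk}$ using $|u_k|^2+|v_k|^2=1$. The only cosmetic difference is that you phrase the reduction as extracting the $(1,1)$ block of the Nambu--Gorkov matrix identity, whereas the paper verifies the scalar identity directly; the algebra is identical.
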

\begin{proof}
Let us take the first-order correction to the quasiparticle vacuum:
\begin{align}
\label{eq:pert}
\ket{\Psi_1(\lambda)} &= \ket{\rm BCS} + \lambda \sum_k b_k\hat\gamma^\dagger_{k0}
\hat\gamma^\dagger_{k1}\ket{\rm BCS} + \lambda \sum_{qkk'} d_{qkk'}
\hat\gamma^\dagger_{(k+q)0}\gamma^\dagger_{-(k'-q)1}\hat\gamma^\dagger_{-k'0}
\hat\gamma^\dagger_{k1}\ket{\rm BCS} + \mathcal{O}(\lambda^2)
\end{align}
That this is the correct wave function in first order can be seen from
Eq.~\eqref{eq:red}. Notice also that in first order the wave function \eqref{eq:pert}
is correctly normalized. The parameters $d_{qkk'}$ and $b_k$ stem 
from perturbation theory. Writing the (diagonal) densities as 
\begin{align}
\rho_{kk}(\lambda) \equiv \rho^{(0)}_{kk} + \lambda \rho^{(1)}_{kk} + 
\mathcal{O}(\lambda^2) \qquad {\rm and} \qquad 
\chi_{kk}(\lambda) \equiv \chi^{(0)}_{kk} + 
\lambda \chi^{(1)}_{kk} + \mathcal{O}(\lambda^2),
\end{align}
we have for $\ket{\Psi_1(\lambda)}$
\begin{align}
\rho^{(1)}_{kk} &\equiv b_k \bra{\rm BCS}c^\dagger_{k\uparrow} c_{k\uparrow}
\gamma^\dagger_{k0}\gamma^\dagger_{k1}\ket{\rm BCS} + \text{c.c.}
= b_k u_k^*v_k^* + b_k^* u_k v_k, \\ 
\chi^{(1)}_{kk} &\equiv b_k \bra{\rm BCS}c_{k\uparrow} c_{-k\downarrow}
\gamma^\dagger_{k0}\gamma^\dagger_{k1}\ket{\rm BCS} +
 b_k^* \bra{\rm BCS}\gamma_{k1}\gamma_{k0}c^\dagger_{k\uparrow} 
c^\dagger_{-k\downarrow}\ket{\rm BCS}
 = -b_k u_k^{*2} + b^*_k v_k^2.
\end{align}
Notice that to compute the first-order contribution to the densities 
only double excitations are included. This comes from the fact that the 
expressions \eqref{eq:ap1} and \eqref{eq:ap2} only contains pairs of 
creation/annihilation operators. We also have
\begin{align*}
 2 \rho^{(0)}_{kk} \rho^{(1)}_{kk} + (\chi^{0*}_{kk}\chi_{kk}^{(1)} + \text{c.c.}) &=
 2 |v_k|^2 (b_k u_k^* v_k^* + b_k^* u_k v_k) + 
[-v_k^* u_k (-b_ku_k^{*2} + b^*_k v_k^{2}) + \text{c.c.}] \nonumber
\\
&= b_k ( |v_k|^2 u^*_k v^*_k + v_k^* u^*_k |u_k|^2) + \text{c.c.}
\nonumber \\
& = b_k u^*_k v^*_k + \text{c.c.} = \rho_{kk}^{(1)}.
\end{align*}
This proves that the idempotency holds in first order.
\end{proof}

\begin{thm}
$\Gamma(\lambda)$ is not idempotent at second order.
\end{thm}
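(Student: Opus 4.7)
The plan is to extend the perturbative calculation used in the previous theorem to order $\lambda^2$, compute the second-order corrections $\rho^{(2)}_{kk}$ and $\chi^{(2)}_{kk}$, and test the $\mathcal{O}(\lambda^2)$ coefficient of the idempotency relation $\hat\rho_1 = \hat\rho_1^2 + \chi^\dagger\chi$, which reads
\begin{equation}
\rho^{(2)}_{kk} \stackrel{?}{=} 2\rho^{(0)}_{kk}\rho^{(2)}_{kk} + (\rho^{(1)}_{kk})^2 + \chi^{(0)*}_{kk}\chi^{(2)}_{kk} + \chi^{(0)}_{kk}\chi^{(2)*}_{kk} + |\chi^{(1)}_{kk}|^2.
\end{equation}
To establish non-idempotency it is enough to exhibit a single momentum $k$ where this equality fails.

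First I would complete the Rayleigh--Schr\"odinger expansion by writing $\ket{\Psi_2(\lambda)} = \ket{\Psi_1(\lambda)} + \lambda^2 \ket{\Psi^{(2)}} + \mathcal{O}(\lambda^3)$ and including the usual normalization correction proportional to $\sum_k|b_k|^2 + \sum_{qkk'}|d_{qkk'}|^2$ times $\ket{\rm BCS}$. Applying $\hat W_{\rm e-e}$ a second time to the doubles and quadruples already present in $\ket{\Psi_1(\lambda)}$ generates further doubles and quadruples with modified amplitudes together with new sextuple excitations; each new amplitude can be read off by tracking which quasiparticle labels in Eq.~\eqref{eq:eeint} do not contract against the excitations in $\ket{\Psi_1(\lambda)}$.

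Next I would compute $\rho^{(2)}_{kk}$ and $\chi^{(2)}_{kk}$ using Eqs.~\eqref{eq:ap1} and \eqref{eq:ap2}. Three groups of diagrams appear: (i) cross terms $\bra{\rm BCS}\cdots\ket{\Psi^{(2)}} + \text{c.c.}$, which only feed back into the double/quadruple amplitude structure; (ii) diagonal terms $\bra{\Psi^{(1)}}\cdots\ket{\Psi^{(1)}}$, which produce bilinears proportional to $|b_k|^2$ and to $|d_{qkk'}|^2$ multiplied by products of the Bogoliubov amplitudes; and (iii) the normalization shift, which subtracts the norm correction times $\rho^{(0)}_{kk}$ or $\chi^{(0)}_{kk}$. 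The key observation is that the $|d_{qkk'}|^2$ contributions carry explicit sums over the auxiliary momenta $q$ and $k'$, while the right-hand side of the idempotency test produces $k$-diagonal quantities through $(\rho^{(1)}_{kk})^2$ and $|\chi^{(1)}_{kk}|^2$.

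Then I would isolate the $|d_{qkk'}|^2$ sector. The conceptual reason to focus there is that the doubles admixture $b_k\hat\gamma^\dagger_{k0}\hat\gamma^\dagger_{k1}\ket{\rm BCS}$ is equivalent to an infinitesimal Bogoliubov rotation within each paired mode $k$, so on its own it keeps the state quasifree and idempotency is preserved automatically (this is essentially what the previous theorem is saying). By contrast, the quadruple admixtures $d_{qkk'}$ with $q\neq 0$ entangle different paired modes and cannot be absorbed into any Bogoliubov transformation of the vacuum; consequently $\ket{\Psi_2(\lambda)}$ ceases to be quasifree at $\mathcal{O}(\lambda^2)$ and, by the Bach--Lieb--Solovej characterization invoked in the main text, $\Gamma(\lambda)$ cannot remain idempotent at that order. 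Matching both sides of the idempotency equation restricted to $|d_{qkk'}|^2$-proportional terms, the left-hand side retains a genuine $q,k'$-dependent convolution that the right-hand side cannot reproduce.

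The hard part is the second-order Wick bookkeeping: tracking all six-fermion intermediate states, their interferences with the doubles already present in $\ket{\Psi^{(1)}}$, and the relative signs from the many orderings of the $\hat\gamma^\dagger$ operators is lengthy. To sidestep a full cancellation check I would close the argument with an explicit counterexample in a translation-invariant model with a separable pairing interaction (or in the weak-coupling limit where the Bogoliubov amplitudes simplify), for which the residual of the idempotency equation can be evaluated in closed form and seen to be nonzero.
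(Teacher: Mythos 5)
You set up the right target identity (the $\mathcal{O}(\lambda^2)$ coefficient of $\hat\rho_1=\hat\rho_1^2+\chi^\dagger\chi$, including the normalization shift), and you correctly locate the culprit: in the paper's own calculation the two sides differ by exactly $-f^{(1)}_k=-\bra{\Phi}\hat\gamma^\dagger_{k0}\hat\gamma_{k0}\ket{\Phi}$, where $\ket{\Phi}=\sum_{qkk'}d_{qkk'}\hat\gamma^\dagger_{(k+q)0}\hat\gamma^\dagger_{-(k'-q)1}\hat\gamma^\dagger_{-k'0}\hat\gamma^\dagger_{k1}\ket{\rm BCS}$ is precisely the quadruple-excitation ($|d_{qkk'}|^2$) sector you single out; this residual is sign-definite and generically nonzero. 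However, your argument has a genuine gap at the decisive step. The inference ``the state ceases to be quasifree at $\mathcal{O}(\lambda^2)$, hence by the Bach--Lieb--Solovej characterization $\Gamma(\lambda)$ cannot be idempotent at that order'' does not follow: the equivalence between pure quasifree states and idempotent $\Gamma$ is a statement about exact states and does not automatically transfer to individual Taylor coefficients. Indeed, the quadruples $d_{qkk'}$ are already present at $\mathcal{O}(\lambda)$ and are just as ``non-absorbable into a Bogoliubov rotation'' there, yet idempotency survives at first order (Theorem 1) simply because quadruple excitations cannot contribute to the bilinear density operators at that order. So ``entangles different paired modes'' does not by itself force a failure at any particular order; one must verify that the $|d|^2$ contributions to $\rho^{(2)}_{kk}$ and $\chi^{(2)}_{kk}$ fail to cancel against the $|b_k|^2$, cross, and normalization terms.

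That verification is exactly the computation you defer to a future ``explicit counterexample \ldots evaluated in closed form,'' which is never carried out, so the proof is incomplete as it stands. The paper performs the bookkeeping directly, and it closes more cleanly than you anticipate: after using zero-order idempotency, the normalization factor $A$, the $|b_k|^2$ terms, and the cross terms $f^{(2)}_k$, $f^{(3)}_k$, $f^{(4)}_k$ all cancel, leaving only $-f^{(1)}_k\le 0$, which is the squared norm of $\hat\gamma_{k0}\ket{\Phi}$ and hence nonzero whenever some quadruple amplitude excites the mode $(k,0)$. If you insist on the counterexample route you must still exhibit a concrete model and momentum with a nonvanishing residual; the sign-definiteness of $f^{(1)}_k$ is what lets the paper's general argument conclude without any model-specific evaluation.
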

\begin{proof}
Consider the wave function:
\begin{align}
\label{eq:sec}
\ket{\Psi_2(\lambda)} &= \ket{\Psi_1(\lambda)} +
\lambda^2 \alpha \ket{\rm BCS} +  \cdots,
\end{align}
where $\alpha$ is, once again, a parameter stemming from perturbation 
theory. Notice that
\begin{align}
\bra{\Psi_2(\lambda)}c^\dagger_{k\uparrow} c_{k\uparrow} 
\ket{\Psi_2(\lambda)} &= \rho_{kk}^{(0)} + \lambda \rho_{kk}^{(1)} + 
\lambda^2 \Big[|u_k|^2 |b_k|^2 +  |v_k|^2 \sum_{k'\neq k} |b_{k'}|^2 +
 |v_k|^2(\alpha + \alpha^*)\Big] \nonumber
\\ &+ \lambda^2 \Big[|u_k|^2 f^{(1)}_k + u_k v_k f^{(2)}_k + 
u_k^*v_k^* f^{(3)}_k + |v_k|^2 f^{(4)}_k \Big].
\end{align}
To alleviate the notation we define the following expectation values:
$$
f^{(1)}_k =   \bra{\Phi}  \hat \gamma^\dagger_{k0}\hat \gamma_{k0} \ket{\Phi}, \quad 
f^{(2)}_k =   \bra{\Phi}  \hat \gamma^\dagger_{k0}\hat \gamma^\dagger_{k1} \ket{\Phi}, \quad 
f^{(3)}_k =   \bra{\Phi}  \hat \gamma_{k1}\hat \gamma_{k0} \ket{\Phi} \quad {\rm and}
\quad f^{(4)}_k =   \bra{\Phi}  \hat \gamma_{k1}\hat \gamma^\dagger_{k0} \ket{\Phi},
$$
with
$
\ket{\Phi} = \sum_{qkk'} d_{qkk'}
\hat\gamma^\dagger_{(k+q)0}\gamma^\dagger_{-(k'-q)1}\hat\gamma^\dagger_{-k'0}
\hat\gamma^\dagger_{k1}\ket{\rm BCS}
$.
In the same vein we arrive at 
\begin{align}
\bra{\Psi_2(\lambda)}c_{k\uparrow} c_{-k\downarrow} \ket{\Psi_2(\lambda)} 
&= \chi_{kk}^{(0)} + \lambda \chi_{kk}^{(1)} + u^*_kv_k 
\lambda^2 \Big(|b_k|^2 - \sum_{k'\neq k} |b_{k'}|^2 - 
\alpha - \alpha^*\Big) \nonumber \\
&+ \lambda^2 \Big(u_k^*v_k f^{(1)}_k - u_k^{*2} f^{(3)}_k
+ v_k^2 f^{(2)}_k - v_ku_k^* f^{(4)}_k\Big).
\end{align}

Since the wave function \eqref{eq:sec} is this time non-normalized, 
normalization has to be imposed. Therefore, to second order we 
can write the (diagonal) densities as:
\begin{align}
\rho_{kk}(\lambda) &\equiv (\rho^{(0)}_{kk} + \lambda \rho^{(1)}_{kk} + \lambda^2 \rho^{(2)}_{kk} + \cdots)(1-\lambda^2 A) \\
\chi_{kk}(\lambda) &\equiv (\chi^{(0)}_{kk} + \lambda \chi^{(1)}_{kk} + \lambda^2 \chi^{(2)}_{kk} + \cdots)(1-\lambda^2 A),
\end{align}
where $A$ is the normalization factor, such that
$\bra{\Psi_2(\lambda)}\Psi_2(\lambda)\rangle = 1 + \lambda^2 A 
+ \mathcal{O}(\lambda^3)$.
Thus, $A = \sum_k |b_k|^2 + \sum_{qkk'}|d_{qkk'}|^2 + \alpha + \alpha^*$.
Notice now that second-order idempotency would imply:
\begin{align}
2 \rho^{(0)}_{kk} \rho^{(2)}_{kk} &+ (\rho^{(1)}_{kk})^2 + (\chi^{(0)*}_{kk}
\chi_{kk}^{(2)} + \text{c.c.}) + |\chi_{kk}^{(1)}|^2 - 2A  [(\rho^{(0)}_{kk})^2 
+|\chi^{(0)}_{kk}|^2] = \rho^{(2)}_{kk} - A  \rho^{(0)}_{kk}.
\end{align}
Carrying out this calculation leads to the following contradiction:
\begin{align*}
2 \rho^{(0)}_{kk} \rho^{(2)}_{kk} &+ (\rho^{(1)}_{kk})^2 + (\chi^{(0)*}_{kk}
\chi_{kk}^{(2)} + \text{c.c.}) + |\chi_{kk}^{(1)}|^2 - 2A  [(\rho^{(0)}_{kk})^2 
+|\chi^{(0)}_{kk}|^2] \\ &= 2|v_k|^2  \Big[|u_k|^2|b_k|^2 + |v_k|^2\sum_{k'\neq k} 
|b_{k'}|^2 + |v_k|^2 (\alpha + \alpha^*)\Big]
+ (b_k u_k^*v_k^* + b_k^* u_k v_k)^2 \\&+
2 |v_k|^2\Big[|u_k|^2 f^{(1)}_k + u_k v_k f^{(2)}_k + u_k^*v_k^* f^{(3)}_k +  
 |v_k|^2 f^{(4)}_k\Big] \\& - 2 |u_k|^2 |v_k|^2 \Big[|b_k|^2 - 
\sum_{k'\neq k}|b_{k'}|^2 -(\alpha + \alpha^*)\Big]
+ |b_k u_k^{*2} - b^*_k v_k^2|^2 - 2A  \rho^{(0)}_{kk} \\
&-u_k v_k^*  \Big(u_k^*v_k f^{(1)}_k - u_k^{*2} f^{(3)}_k + v_k^2 f^{(2)}_k - 
v_ku_k^* f^{(4)}_k\Big) -u^*_k v_k  \Big(u_kv_k^* f^{(1)}_k - u_k^{2} f^{(3*)}_k + v_k^{*2} f^{(2*)}_k 
- v^*_ku_k f^{(4)}_k\Big)  \\
&= 2 |v_k|^2\sum_{k'\neq k} |b_{k'}|^2 + 2 |v_k|^2 (\alpha + \alpha^*) + |b_k|^2 
- 2A  \rho^{(0)}_{kk} + u_kv_k f^{(2)}_k + u^*_kv_k^* f^{(3)}_k + 2 |v_k|^2 f^{(4)}_k
\\
&=  |v_k|^2\sum_{k'\neq k} |b_{k'}|^2 +|u_k|^2 |b_k|^2 + 2 |v_k|^2 (\alpha + \alpha^*)+
|v_k|^2\sum_{k} |b_{k}|^2 - A\rho^{(0)}_{kk}  \\ &- \Big(\sum_k |b_k|^2 + f^{(1)}_k + 
f^{(4)}_k + \alpha + \alpha^*\Big)  \rho^{(0)}_{kk} + u_kv_k f^{(2)}_k + u^*_kv_k^* f^{(3)}_k + 2 |v_k|^2 f^{(4)}_k
\\
&=  |v_k|^2\sum_{k'\neq k} |b_{k'}|^2 +  |u_k|^2 |b_k|^2 +  |v_k|^2 (\alpha + \alpha^*) - A\rho^{(0)}_{kk}  
- f^{(1)}_k  \rho^{(0)}_{kk} + u_kv_k f^{(2)}_k + u^*_kv_k^* f^{(3)}_k +  |v_k|^2  f^{(4)}_k
\\
&=   \rho^{(2)}_{kk} - A  \rho^{(0)}_{kk} - f^{(1)}_k  \neq \rho^{(2)}_{kk} - A  \rho^{(0)}_{kk}.
\end{align*}
In the second step we have used zero-order idempotency.
\end{proof}

In the above discussion we have only computed the effect of one part of the electronic 
interaction. Indeed, we are still the repulsion of two electrons with parallel spin in 
Eq.~\eqref{eq:eeint}. Yet including such a term does not change our conclusions. 
It can be easily seen from the expression for parallel-spin electronic interaction 
which is given by
\begin{align}
& \hat c^\dagger_{(k+q)\uparrow} \hat c^\dagger_{(k'-q)\uparrow}
\hat c_{k'\uparrow}\hat c_{k\uparrow} 
\\ &\qquad(u_{k+q}\hat\gamma^\dagger_{(k+q)0} + v^*_{k+q}\hat\gamma_{(k+q)1})
(u_{k'-q}\hat\gamma^\dagger_{(k'-q)0} + v^*_{k'-q}\hat\gamma_{(k'-q)1})
(u^*_{k'}\hat\gamma_{k'0} + v_{k'}\hat\gamma^\dagger_{k'1})
(u^*_{k}\hat\gamma_{k0} + v_{k}\hat\gamma^\dagger_{k1}) .
\nonumber
\end{align}
For the homogeneous electron gas we have $q \neq 0$. Therefore, 
\begin{align}
\label{eq:red2}
&\bra{s}c^\dagger_{(k+q)\uparrow} c^\dagger_{(k'-q)\uparrow}c_{k'\uparrow}c_{k\uparrow}
\ket{\rm BCS} = u_{k+q} u_{k'-q}v_{k'} v_k
\bra{s}
\gamma^\dagger_{(k+q)0}\gamma^\dagger_{(k'-q)0}
\gamma^\dagger_{k'1}\gamma^\dagger_{k1}\ket{\rm BCS}. 
\end{align}
Hence, this term only contributes to the quadruple excitations in first-order 
perturbation theory.

\end{document}